\newtheorem{theorem}{Theorem}
\newtheorem{remark}[theorem]{Remark}
\newtheorem{lemma}[theorem]{Lemma}
\newtheorem{proposition}[theorem]{Proposition}
\theoremstyle{nonumberplain}
\newtheorem{assumption}{Assumption}
\newtheorem{intuitive case}{Intuitive Case}
\theoremstyle{nonumberplain}
\newtheorem{proof}{Proof}
\renewcommand{\epsilon}{\ensuremath\varepsilon}
\renewcommand{\phi}{\ensuremath{\varphi}}
\DeclareMathOperator{\Tr}{Tr}
\newcommand{\com}[1]{}
\newcommand*\circled[1]{\tikz[baseline=(char.base)]{
			\node[shape=circle, draw,inner sep=1pt] (char) {#1}; }}
\newcommand\myequ[1]{\mathrel{\overset{\makebox[0pt]{\mbox{\normalfont\tiny\sffamily #1}}}{=}}}
\title{Stahl's Theorem (aka BMV Conjecture): Insights and Intuition on its Proof}
\author{Fabien Clivaz \\Institute for Theoretical Physics, \\ETH Z\"urich, 8093 Z\"urich, Switzerland}
\date{Monday \(29^{\text{th}}\) February, 2016}
\begin{document}

\maketitle

\begin{abstract}
 The Bessis–Moussa–Villani conjecture states that the trace of \(\exp(A - tB)\) is, as a function of the real variable \(t\), the Laplace transform of a positive measure, where \(A\) and \(B\) are respectively a hermitian and positive semi-definite matrix. The long standing conjecture was recently proved by Stahl and streamlined by Eremenko. We report on a more concise yet self-contained version of the proof.
\end{abstract}

\section{Statement \label{sec-statement}}

In 1975, Bessis Moussa and Villani conjectured  in \cite{Bessis-75} a way of rewriting the partition function of a broad class of statistical systems. The precise statement can be formulated as follows, see \cite{Lieb-12,Lieb-04} for a popular reformulation by E. Lieb and R. Seiringer.

\begin{theorem}[Stahl's Theorem]
\label{thm-BMV}
Let A and B be two \(n \times n\) Hermitian matrices, where B is positive semidefinite. Then the function
\begin{equation}
f(t) := Tr \; e^{A-tB}, \; \; t \geq 0
\end{equation}
can be represented as the Laplace transform of a non-negative measure \(\mu\). That is,
\begin{equation}
f(t) = \int_0^{\infty} e^{-ts} d\mu(s).
\end{equation} 
\end{theorem}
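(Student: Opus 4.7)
When $A$ and $B$ commute, the theorem is immediate: simultaneously diagonalize to write $f(t) = \sum_{j=1}^n e^{a_j - t b_j}$, so $\mu$ is the sum of non-negative Dirac masses $e^{a_j}\delta_{b_j}$, supported on $[0,\infty)$ because $b_j \geq 0$. The whole substance of the theorem lies in the non-commuting case, where this joint diagonalization breaks down. My plan is to replace it with a spectral resolution that lives not on $\C$ itself but on a Riemann surface above it.

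First I would perform the standard density reduction: assume $B$ is strictly positive definite and that $A - zB$ has $n$ distinct eigenvalues for $z$ outside a finite subset of $\C$. Both properties are generic, and the conclusion is stable under such limits, since a pointwise limit of Laplace transforms of uniformly bounded non-negative measures is again of the same form. Under these hypotheses the eigenvalues $\lambda_1(z), \ldots, \lambda_n(z)$ of $A - zB$ are the $n$ branches of a single algebraic function $w$ defined by $p(z,w) := \det(wI - A + zB) = 0$. Its normalization is a compact Riemann surface $\mathcal{R}$ with a degree-$n$ projection $\pi \colon \mathcal{R} \to \C \cup \{\infty\}$, on which $w$ becomes single-valued and meromorphic, with poles only over $z = \infty$ and leading behavior $w(\zeta) \sim -\beta_j\, \pi(\zeta)$ as $\pi(\zeta) \to \infty$, where $\beta_j > 0$ are the eigenvalues of $B$. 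The trace then factors as
\begin{equation*}
  f(z) \;=\; \sum_{\zeta \in \pi^{-1}(z)} e^{w(\zeta)}.
\end{equation*}
Hermiticity of $A - zB$ for real $z$ forces all branch points of $w$ onto $\R$, makes $w$ real on $\pi^{-1}(\R)$, and endows $\mathcal{R}$ with an anti-holomorphic involution $\zeta \mapsto \bar{\zeta}$ induced by simultaneous conjugation of $z$ and $w$.

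I would then recover $\mu$ by an inverse-Laplace contour integral. Writing the candidate density on a vertical line $\Gamma_c = \{c + i\R\}$ with $c$ sufficiently large,
\begin{equation*}
  h(s) \;\stackrel{?}{=}\; \frac{1}{2\pi i}\int_{\Gamma_c} f(z)\, e^{zs}\, dz,
\end{equation*}
I would deform $\Gamma_c$ leftwards across the branch cuts of the $\lambda_j$. The linear decay of $w$ on each sheet guarantees that the integrand vanishes sufficiently fast at infinity and legitimises the deformation, while the location of the cuts on $\R$ ensures that $h$ is supported on the real axis. The main obstacle, and the real content of Stahl's Theorem, is to show that the resulting $h(s)$ is non-negative and supported on $[0,\infty)$: each branch contributes a signed jump determined by the monodromy of $\mathcal{R}$, and one must prove that these signs align for every admissible $s$. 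Stahl's original argument handled this with delicate logarithmic potential theory on $\mathcal{R}$; Eremenko's streamlined version, which I would follow, uses the conjugation involution to pair up sheets and reduces positivity to a topological/combinatorial statement about sheet permutations, controlled via the Riemann-Hurwitz formula.
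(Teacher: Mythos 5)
Your setup follows the paper's route: the commuting case with Dirac masses, the reduction to a strictly positive $B$ with simple spectrum via a limiting argument, the algebraic function $w$ with its Riemann surface, branch points confined to $\R$ by hermiticity, the anti-holomorphic involution, and the recovery of the density by an inverse-Laplace (Bromwich) integral deformed onto the cuts. All of that is sound and is essentially Sections 1--4 of the paper. But the proof stops exactly where the theorem begins. You yourself identify non-negativity of the density as ``the main obstacle, and the real content of Stahl's Theorem,'' and then you do not prove it: you defer to ``Eremenko's streamlined version, which I would follow.'' A proof proposal that outsources its only hard step to a citation is not a proof, and here the deferral is compounded by a mischaracterization of what that argument actually is. Positivity in Eremenko's proof (and in the paper) is not ``a topological/combinatorial statement about sheet permutations, controlled via the Riemann--Hurwitz formula.'' It is an analytic argument: for fixed $s \in (b_k, b_{k+1})$ one sets $g := \lambda + s\pi$ on the surface, defines $D := \{\xi \mid \operatorname{Im}g(\xi)/\operatorname{Im}\pi(\xi) > 0\}$, and takes $\gamma := \partial D$, a union of closed curves symmetric under the involution $\rho$ with $\operatorname{Im}g = 0$ on them. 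The Cauchy--Riemann equations show $\operatorname{Re}(g \circ \gamma_k)$ is monotone increasing on the part of $\gamma_k$ over the upper half-plane and decreasing over the lower one; an integration by parts (with boundary terms killed by the $\rho$-symmetry) then gives $-\frac{1}{2\pi i}\int_{\gamma_k} e^{g(\xi)}\,d\xi > 0$. Finally, Cauchy's theorem applied to $D_0 := D \setminus \pi^{-1}(\overline{U})$, whose boundary is $\gamma$ together with exactly the circles $C_1, \dots, C_k$ lying over $\partial U$ on the sheets with $b_j < s$, identifies this positive quantity with the contour expression for $\omega(s)$. None of this is monodromy bookkeeping, and none of it is supplied or even correctly sketched in your proposal.

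A secondary, smaller point: you also do not verify that your candidate density actually reproduces $f$, i.e.\ the identity $\mathcal{L}(\mu) = f$ with $\mu = \omega + \sum_j e^{a_{jj}}\delta_{b_j}$. In the paper this is Theorem \ref{thm-mupost}, proved by an explicit residue computation (the pole at $z = t$ producing $-e^{\lambda_j(t)}$, the large circle producing $e^{a_{jj} - b_j t}$), and it is what forces the Dirac part of $\mu$ to appear with exactly the weights $e^{a_{jj}}$. Your deformation-of-$\Gamma_c$ sketch is plausible but would need this computation, or an equivalent one, to be carried out; as written it establishes neither the support statement nor the precise splitting into atomic and absolutely continuous parts. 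So the proposal is a correct framing of the problem with the two substantive verifications --- the Laplace identity and, above all, positivity --- left unproven.
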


More than 30 years later, after having raised the interest of many scientists \cite{Lieb-04, Moussa-00, Hillar-05, Johnson-05, Hansen-06, Hagele-07, Hillar-07, Klep-08, Fleischhack-10, Burgdorf-08}, Stahl published a proof of this conjecture in \cite{Stahl-12}. A minimal version of the proof has meanwhile been published by Eremenko in \cite{Eremenko-13}. Our aim is to reconcile the exactness of Stahl's version of the proof with the clarity of Eremenko's version.

\begin{intuitive case}
To get a feeling of why the above theorem holds, let us investigate the case where \(A\) and \(B\) commute.

Since our matrices are simultaneously diagonalisable, we can w.l.o.g. assume that they are given in diagonal form and exponentiating them becomes trivial. We therefore have:
\begin{equation}
f(t)=Tr \; e^{A-tB}= \sum_{j=1}^n e^{a_j} e^{-tb_j}.
\end{equation}

We next define the measure \(\mu:= \sum_{j=1}^n e^{a_j} \delta_{b_j}\), where \(a_j\) and \(b_j\) are the matrix elements of A and B, and \(\delta_{b_j}\) is the Dirac measure on \(\mathbb{R}\).
\com{ defined for \(\Omega \in \mathbb{R}\) by
\begin{equation}
\delta_{b_i} (\Omega)= \left\{
\begin{array}{l l}
0 & \quad \text{if } b_i \notin \Omega \\
1 & \quad \text{if } b_i \in \Omega
\end{array} \right.
\end{equation}

}By noting that for any function \(g (s)\), \( \int_0^{\infty} g(s) d\delta_{b_j} = g(b_j)\), one immediately sees that
\begin{equation}
 \int_0^{\infty} e^{-ts} d\mu(s) = \sum_{j=1}^n e^{a_j} e^{-t b_j} = f(t);
\end{equation}
showing that, in the case of commuting matrices, the BMV conjecture is realized with a discrete positive \(\mu\).
\end{intuitive case}

To simplify the analysis of the general case, we first prove the following

\begin{assumption}
W.l.o.g. \(B\) can be assumed to have distinct positive eigenvalues \(b_n > \cdots > b_1 > 0\).
\end{assumption}

\begin{proof}
Let \(B \geq 0\). We work in the diagonal basis of \(B\). We define \(B_{\epsilon}:=B+\epsilon \, D\) with \(D=diag(1, 2, \dots, n)\). Assuming Theorem \(\ref{thm-BMV}\) holds for \(B_{\epsilon}\), we want to prove it also holds for \(B\); that is, assuming \(\mu_{\epsilon}\) exists and is non-negative, we want to prove \(\mu\) exists and is non-negative.

Since the following involves the inverse Laplace transform, it is convenient to write the objects as tempered distributions. Explicitly,
\begin{equation}
\begin{aligned}
\mu_{\epsilon}[\phi] &:= \int_0^{\infty} \phi(s) d\mu_{\epsilon}(s),\\
f_{\epsilon}[\phi] &:= \int_0^{\infty} \phi(s) f_{\epsilon}(s) ds;\\
\end{aligned}
\end{equation}
for test functions \(\phi \in C_0^{\infty}(\mathcal{R^+})\). We note that \(\mu_{\epsilon} \geq 0 \Leftrightarrow \mu_{\epsilon}[\phi] \geq 0 ,\, \forall \phi \geq 0\).
Denoting the Laplace transform by \(\mathcal{L}\), we have:
\begin{equation}
\mathcal{L}(\mu_{\epsilon})[\phi]:=\mu_{\epsilon}[\mathcal{L}(\phi)]=f_{\epsilon}[\phi],
\end{equation}
which using the Bromwich integral formula yields 
\begin{equation}
\mu_{\epsilon}[\phi]=\mathcal{L}^{-1}(f_{\epsilon})[\phi]=\frac{1}{2\pi i} \int_{x-i \infty}^{x+i \infty} f_{\epsilon}(z) \left( \int_0^{\infty} e^{z(s)} \phi(s) ds \right) dz.
\label{equ-mu}
\end{equation}

Using the Dominated Convergence Theorem one shows that (see appendix A of \cite{Clivaz-14})
\begin{equation}
\lim_{\epsilon \rightarrow 0} \mu_{\epsilon}[\phi]=\frac{1}{2\pi i} \int_{x-i \infty}^{x+i \infty} f(z) \left( \int_0^{\infty} e^{z(s)} \phi(s) ds \right) dz \geq 0\;, \; \forall \phi \geq 0,
\end{equation}
where the inequality comes from \(\mu_{\epsilon} [\phi] \geq 0\;, \; \forall \phi \geq 0\). So with
\begin{equation}
\mu[\phi] := \frac{1}{2 \pi i} \int_{x - i \infty}^{x+ i \infty} f(z) \left(\int_0^{\infty} e^{z(s)} \phi(s) ds \right) dz,
\end{equation}
we have \(f= \mathcal{L}(\mu\)) and \(\mu \geq 0\).
\end{proof}

\section{Eigenvalues of \(A-tB\)}

We tackle the general case by looking at \(\lambda_1(t), \dots, \lambda_n(t) \); the eigenvalues of \(A-tB\) .
\begin{theorem} \hfill
\label{thm-lambda}
\begin{description}
\item[i)] \(\lambda_1,\dots,\lambda_n\) have no branch point over \(\mathbb{R}\).
\item[ii)] \(\lambda_1, \dots, \lambda_n\) are analytic in a neighborhood of infinity and \(\forall j=1,\dots,n \)
\begin{equation}
\lambda_j(t) = a_{jj} - t b_j + \mathcal{O}(\frac{1}{t}) \; (t \rightarrow \infty ).
\end{equation}
\end{description}
\end{theorem}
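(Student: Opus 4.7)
My plan is to study the eigenvalues as the roots of the characteristic polynomial $P(\lambda,t) := \det(\lambda I - A + tB)$, which is a polynomial in both variables and of degree $n$ in $\lambda$. The $\lambda_j(t)$ are thus algebraic functions of $t$, globally defined on a Riemann surface of degree $n$ over the $t$-plane, ramified only at the finitely many zeros of the discriminant $\Delta(t) \in \mathbb{C}[t]$. The two parts of the theorem amount to showing that no ramification occurs on $\mathbb{R}$ respectively at $\infty$, and then to extracting the first two terms of the Laurent expansion at $\infty$.

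For (i), I would rule out a hypothetical real branch point $t_0$ by a Puiseux--reality argument. A cycle of length $k \geq 2$ at $t_0$ admits the local representation $\lambda^{(l)}(t) = g\!\bigl(\zeta^{l}(t-t_0)^{1/k}\bigr)$, $l = 0,\dots,k-1$, with $\zeta := e^{2\pi i/k}$ and $g(u)=\lambda_0+\sum_{m\geq1}c_m u^m$ holomorphic near $0$. Since $A-tB$ is Hermitian for $t\in\mathbb{R}$, every branch must be real on both sides of $t_0$. Evaluating at $t>t_0$ (where $(t-t_0)^{1/k}$ is real positive) forces $c_m\zeta^{ml}\in\mathbb{R}$ for every $l,m$, hence $\zeta^m\in\{\pm1\}$ whenever $c_m\neq 0$; this collapses any cycle of length $k>2$ to at most two distinct branches, a contradiction, and leaves only the case $k=2$ with all $c_m$ real. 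Evaluating then at $t<t_0$ (where $(t-t_0)^{1/2}$ is purely imaginary) forces the odd $c_m$ to vanish as well, so $g(u)=g(-u)$ and the two branches coincide, contradicting $t_0$ being a branch point. A shortcut is to quote Rellich's theorem on real-analytic self-adjoint families.

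For (ii), I would pass to $s := 1/t$ and consider $M(s):=sA-B$, an entire matrix-valued function of $s$ with $M(0)=-B$. By the standing assumption, $-B$ has $n$ simple eigenvalues $-b_1,\dots,-b_n$, so Kato's analytic perturbation theory furnishes $n$ branches $\tilde\lambda_j(s)$ holomorphic on a disc around $s=0$ with $\tilde\lambda_j(0)=-b_j$. First-order perturbation in the eigenbasis of $B$, in which the unperturbed eigenvectors are the standard basis vectors, gives $\tilde\lambda_j'(0)=\langle e_j|A|e_j\rangle = a_{jj}$, hence $\tilde\lambda_j(s)=-b_j+a_{jj}s+\mathcal{O}(s^2)$. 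Reverting via $\lambda_j(t)=t\,\tilde\lambda_j(1/t)$ yields analyticity on $\{|t|>R\}$ for some $R$, together with the expansion $\lambda_j(t)=a_{jj}-tb_j+\mathcal{O}(1/t)$.

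The main obstacle is the Puiseux step in (i): one has to exploit the reality constraint on \emph{both} sides of the putative real branch point simultaneously, which is why I would split the argument into the cases $k>2$ (eliminated by the positive-$t$ reality) and $k=2$ (eliminated by the negative-$t$ reality). Part (ii), by contrast, reduces cleanly to first-order analytic perturbation theory, made painless by the simplicity of the spectrum of $B$ guaranteed by the standing assumption.
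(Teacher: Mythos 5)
Your proposal is correct and follows essentially the same route as the paper: part (ii) is the same inversion \(t \mapsto 1/t\) followed by first-order analytic perturbation theory at the simple eigenvalues \(b_j\) of \(B\) (guaranteed by the standing assumption), and part (i) rests on the reality of the spectrum of the Hermitian family, which the paper disposes of by citing Rellich's theorem. Your Puiseux--reality elimination of a real branch point is sound and is in effect a self-contained proof of that cited theorem, so the two arguments differ only in how much of Rellich's theorem is spelled out rather than in approach.
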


\begin{proof}
We want to study 
\begin{equation}
\begin{alignedat}{2}
&\det \left( \lambda(t) \, id - (A-tB) \right) &= 0 \; \text{as } t \rightarrow \infty \\
\Leftrightarrow &\det \left( b(u) \, id - (B+ u A) \right) &= 0 \; \text{as } u \rightarrow 0,
\end{alignedat}
\end{equation}
with 
\begin{equation}
\label{def-bu}
u:= -\frac{1}{t} \;\; \text{and} \; \; b(u) := u\cdot \lambda\left(-\frac{1}{u}\right).
\end{equation}

That is, we are interested in the form of \(b(u)\), the slightly perturbed (isolated) eigenvalues of \(B\). Fortunately, this finds an answer in most text books on Quantum Mechanics. See for e.g. ch. 11.1 of \cite{Schwabl-02} for an intuitive approach or ch. XII of \cite{Reed-78} for a rigorous one. In any case, one finds for \(j=1,\dots,n\):
\begin{equation}
\label{equ-bu}
b_j(u)=b_j + u a_{jj} + \mathcal{O}(u^2) \; (u \rightarrow 0).
\end{equation}

Analyticity and uniqueness of \(b_j(u)\) near \(u=0\) is assured by Theorem XII.1 in \cite{Reed-78} and since \(B+uA\) is self adjoint \(\forall u \in \mathbb{R}\), by Rellich's Theorem (Theorem XII.3 in \cite{Reed-78}), \(b_i(u)\) is analytic and single valued in a neighborhood of \(u_0\), \( \forall u_0 \in \mathbb{R}\). Plugging definition \ref{def-bu} in equation \ref{equ-bu} we therefore have for \(j=1,\dots,n\) that
\begin{equation}
\lambda_j(t)=a_{jj}-t b_j +\mathcal{O} \left(\frac{1}{t}\right) (t \rightarrow \infty)
\end{equation}
is analytic in a neighborhood of infinity and has no branch point over \(\mathbb{R}\).
\end{proof}

\section{Explicit Form of \(\mu\)}
We now postulate an explicit form for \(\mu\).
\begin{theorem}
\label{thm-mupost}
The measure \(\displaystyle \mu :=\omega+ \sum_{j=i}^n e^{a_{jj}} \delta_{b_j} \) satisfies
\begin{equation}
f(t)= \int_0^{\infty} e^{-ts} d\mu(s),
\end{equation}
for \(f(t)= \Tr{e^{A-tB}} \) and \(\displaystyle d\omega (s):= \omega(s) ds\), where 
\begin{equation}
\quad \omega(s):= \frac{1}{2 \pi i} \sum_{j:\, b_j< s} \int_{\partial U} e^{\lambda_j(z)+s z} dz;
\end{equation}
with \(U\) a neighborhood of infinity such that \(\partial U\) is a positively oriented Jordan curve around zero.
\end{theorem}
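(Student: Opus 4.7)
The plan is to verify the Laplace-transform identity by splitting \(\mu\) into its atomic and continuous parts and leaning on Theorem~\ref{thm-lambda} to control each eigenvalue \(\lambda_j\) near infinity. The Dirac part contributes \(\sum_{j=1}^n e^{a_{jj}} e^{-t b_j}\) directly, and the spectral representation \(f(t) = \sum_j e^{\lambda_j(t)}\) reduces the whole claim to establishing
\begin{equation*}
\int_0^\infty e^{-ts}\,\omega(s)\,ds \;=\; \sum_{j=1}^n \Bigl( e^{\lambda_j(t)} - e^{a_{jj} - t b_j} \Bigr)
\end{equation*}
for all \(t\) larger than some threshold. Both sides being analytic in \(t\) on a right half-plane, the identity will then extend to all \(t \ge 0\) by analytic continuation.

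For the reduced identity I would substitute the definition of \(\omega\) and apply Fubini. For \(t\) large enough that \(\Re z < t\) uniformly on \(\partial U\), the \(s\)-integral may be exchanged with the finite sum and the contour integral. The condition \(b_j < s\) converts each \(s\)-integration into \(\int_{b_j}^{\infty} e^{(z-t)s}\,ds = e^{(z-t)b_j}/(t-z)\), giving
\begin{equation*}
\int_0^\infty e^{-ts}\,\omega(s)\,ds \;=\; \sum_{j=1}^n \frac{e^{-t b_j}}{2\pi i} \int_{\partial U} \frac{e^{\lambda_j(z)+b_j z}}{t - z}\,dz.
\end{equation*}

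Next I would evaluate each contour integral by residues. By Theorem~\ref{thm-lambda}, \(h_j(z) := e^{\lambda_j(z)+b_j z}\) is analytic in the neighborhood \(U\) of infinity with \(h_j(z) = e^{a_{jj}} + \mathcal{O}(1/z)\). Deforming \(\partial U\) outward to a large circle \(|z|=L\) crosses the simple pole at \(z=t\), picking up \(2\pi i\,e^{\lambda_j(t)+b_j t}\). Letting \(L \to \infty\), the asymptotic \(h_j(z)/(t-z) = -e^{a_{jj}}/z + \mathcal{O}(1/z^2)\) forces the outer integral to tend to \(-2\pi i\,e^{a_{jj}}\). Collecting contributions, multiplying by \(e^{-t b_j}\), and summing over \(j\) reproduces exactly \(\sum_j ( e^{\lambda_j(t)} - e^{a_{jj} - t b_j})\), which closes the identity.

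The main obstacle is the simultaneous choice of contour and regime of \(t\): \(\partial U\) must sit outside the (generally unknown) branch locus of every \(\lambda_j\) so that Theorem~\ref{thm-lambda} applies in its exterior, while at the same time \(t\) needs to lie outside \(\partial U\) so that Fubini is justified and the pole \(z = t\) is isolated cleanly. This is consistent only for \(t\) strictly larger than the common branch-point radius, after which analyticity of \(\Tr e^{A-tB}\) in \(t\) delivers the identity on all of \(t \ge 0\). Keeping track of orientations — \(\partial U\) positively around zero, the limiting circle also positively around zero but now enclosing \(t\) — is where sign errors are most likely to intrude.
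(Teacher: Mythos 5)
Your large-\(t\) computation is essentially sound, and it is in fact a streamlined version of what the paper does: after Fubini the \(s\)-integral over \((b_j,\infty)\) produces \(e^{b_j(z-t)}/(t-z)\), the pole at \(z=t\) contributes \(e^{\lambda_j(t)}\) (the paper gets this from the residue at \(z=t\) after deforming \(U\) to \(U_1\)), and the contribution at infinity gives \(-e^{a_{jj}-tb_j}\) (the paper gets this from the substitution \(z\mapsto 1/z\) on the large circle \(C\)). The genuine gap is the last step, ``extend to all \(t\ge 0\) by analytic continuation.'' Analytic continuation can only propagate an identity into a region where \emph{both} sides are already known to be defined and analytic. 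You have established convergence of \(\int_0^\infty e^{-ts}\omega(s)\,ds\) only for \(t>\max_{z\in\partial U}\operatorname{Re}z\); from the definition of \(\omega\) alone one only has a bound of the form \(\lvert\omega(s)\rvert\le C\,e^{s\max_{\partial U}\operatorname{Re}z}\), so for small \(t\) --- in particular \(t=0\), which is part of the claim --- the Laplace integral is not known to converge at all. The entirety of the right-hand side \(f(t)-\sum_j e^{a_{jj}-tb_j}\) does not rescue this: for a \emph{signed} density, analytic continuability of the transform does not imply convergence of the integral beyond its proven abscissa (a Landau-type argument would need \(\omega\ge 0\), which is precisely what is not yet known at this stage and occupies the rest of the paper).

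What is missing is exactly the paper's Lemma \ref{lemma-suppw}: \(\operatorname{supp}(\omega)\subset[b_1,b_n]\). For \(s>b_n\) the sum in the definition of \(\omega\) runs over all \(j\), so the integrand becomes \(\Tr\left(e^{A-zB}\right)e^{sz}\), which is analytic in \(z\) inside \(\partial U\) (the paper verifies analyticity of \(\Tr e^{A-zB}\) via the resolvent-integral definition of the matrix exponential), and Cauchy's theorem gives \(\omega(s)=0\); for \(s\le b_1\) the sum is empty. With this in hand \(\omega\) is a continuous compactly supported density, \(\mathcal{L}(\omega)\) is entire, and your identity proved for large \(t\) does extend to all \(t\ge 0\), closing the argument. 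Note that the paper sidesteps the continuation entirely: using Theorem \ref{thm-lambda} i) it deforms \(U\) to a region \(U_1\) containing the given real \(t\) and integrates \(s\) only over the finite intervals \([b_k,b_{k+1}]\), so its computation is valid for every \(t\ge 0\) directly.
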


Before verifying Theorem \ref{thm-mupost}, we prove the useful
\begin{lemma} 
\label{lemma-suppw}
\(\text{supp}(\omega) \subset [b_1,b_n]\).
\end{lemma}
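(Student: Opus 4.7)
My plan is to establish $\omega(s) = 0$ for $s \notin [b_1, b_n]$ by splitting into the two regimes $s < b_1$ and $s > b_n$. The first is essentially tautological; the second reduces to a direct application of Cauchy's theorem after the individual branches $\lambda_j$ are bundled back into a trace.

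For $s < b_1$, since $b_1$ is the smallest eigenvalue of $B$, no index $j$ satisfies $b_j < s$. The sum defining $\omega(s)$ is therefore empty and $\omega(s) = 0$.

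For $s > b_n$, every index satisfies $b_j < s$, so the sum ranges over all $j = 1, \dots, n$. Pulling the (finite) sum inside the integral, I would obtain
\begin{equation*}
\omega(s) \;=\; \frac{1}{2\pi i} \int_{\partial U} e^{sz} \sum_{j=1}^{n} e^{\lambda_j(z)}\, dz \;=\; \frac{1}{2\pi i} \int_{\partial U} e^{sz}\, \Tr e^{A-zB}\, dz,
\end{equation*}
where the second equality uses that the $\lambda_j(z)$ are the eigenvalues of $A-zB$, and hence the $e^{\lambda_j(z)}$ are those of $e^{A-zB}$. The resulting integrand is entire in $z$ (the matrix $e^{A-zB}$ depends entirely on $z$ and the trace is linear, while $e^{sz}$ is obviously entire), so Cauchy's theorem applied to the closed Jordan curve $\partial U$ gives $\omega(s) = 0$.

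The main conceptual obstacle is that Cauchy's theorem cannot be invoked term by term: each branch $\lambda_j(z)$ is only guaranteed analytic on the neighborhood of infinity $U$ (Theorem~\ref{thm-lambda}) and may well carry branch points inside $\partial U$. What saves the argument is that, although the individual labeling of the $\lambda_j$'s is ambiguous as one circles around such points, the symmetric combination $\sum_j e^{\lambda_j(z)}$ is globally the entire function $\Tr e^{A-zB}$. Holomorphy inside $\partial U$ is thus restored only after summation, and it is precisely that post-summation holomorphy that makes Cauchy's theorem applicable.
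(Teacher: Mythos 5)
Your proof is correct and takes essentially the same route as the paper: the sum defining $\omega(s)$ is empty for $s<b_1$, and for $s>b_n$ the branches are recombined into $\Tr e^{A-zB}$, whose holomorphy in $z$ lets Cauchy's theorem annihilate the contour integral, with the key observation (which you state explicitly) that analyticity holds only after summation since individual $\lambda_j$'s may have branch points inside $\partial U$. The only minor difference is in how entirety of $z \mapsto \Tr e^{A-zB}$ is justified: the paper uses the resolvent contour-integral (Kato) definition of $e^{A-zB}$, while you implicitly appeal to the entire dependence of the matrix exponential on $z$ (e.g.\ via its power series) — both are adequate.
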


\begin{proof}
For \(s \leq b_1\) the sum \(\sum_{j: b_j < s}\) is void and hence trivially \(\omega(s)=0\).\\
For \(s > b_n\) we have:
\begin{equation}
\begin{aligned}
2\pi i \, \omega(s)&= \sum_{j=1}^{n} \int_{\partial U} e^{\lambda_j(z)+s z} dz\\
&=\int_{\partial U} \Tr{e^{A- z B}} e^{s z} dz,
\end{aligned}
\end{equation}
where we used the spectral decomposition definition of  \(e^{A-z B}\), that is 
\begin{equation}
e^{A-z B}:= \sum_{\lambda} e^{\lambda} P_{\lambda}; \quad \lambda \text{: Eigenvalue of } A- z B.
\end{equation}

Equivalently, see e.g. \cite{Kato-95}, one can define \(e^{A-z B}\) through
\begin{equation}
\label{def-opint}
e^{A-z B}:= \frac{1}{2 \pi} \int_{\gamma} \left( z' \, id - (A-z B)\right)^{-1} e^{z'} dz',
\end{equation}
with \(\gamma\) enclosing the spectrum of \(A-z B\), thereby ensuring \(\left( z' \, id - (A-z B)\right)^{-1}\) to be well defined for \(z' \in \gamma\) and in fact analytic as a function of \(z\), since for any fixed \(z' \in \gamma\) we have that
\begin{equation}
\begin{aligned}
\frac{d}{dz}\left( z' \, id - (A-z B)\right)^{-1} = &-\left( z' \, id - (A-z B)\right)^{-1} \left(\frac{d}{dz} \left( z' \, id - (A-z B)\right)\right) \\
&\left( z' \, id - (A-z B)\right)^{-1}.
\end{aligned}
\end{equation}

With definition \ref{def-opint} we therefore see that \(\Tr{e^{A-z B}}\) is analytic and hence by Cauchy's Theorem \(\omega(s)=0\).
\end{proof}

\begin{proof}[of Theorem \ref{thm-mupost}]
We want to verify that \(\mathcal{L}(\mu)=f\).

The first part of \(\mu\) is the expression we found in the intuitive case of section \ref{sec-statement}. Using Lemma \ref{lemma-suppw} and noting that  \(\sum_{j: b_j < s} = \sum_{j=1}^k\) for \(s \in (b_k, b_{k+1}]\), we find for the second one
\begin{equation}
\mathcal{L} (\omega) (t)= \int_{b_1}^{b_n} e^{-ts} \omega(s) ds = \sum_{k=1}^{n-1} I_k(t),
\end{equation}
with
\begin{equation}
I_k= \frac{1}{2 \pi i} \int_{b_k}^{b_{k+1}} \left( \sum_{j=1}^k \int_{\partial U} e^{\lambda_j(z)+s(z-t)} dz \right) ds.
\end{equation}

Since according to Theorem \ref{thm-lambda} the \(\lambda_j\)'s have no branch point over \(\mathbb{R}\), by Cauchy's Theorem, we can, without altering the result of the integral, deform \(U\) to \(U_1\), with \(U_1\) as in figure \ref{fig-contour}.
Inverting the sums, i.e. \(\sum_{k=1}^{n-1} \sum_{j=1}^{k} = \sum_{j=1}^{n-1} \sum_{k=j}^{n-1}\), and performing the s-integral, we then get
\begin{equation}
\sum_{k=1}^{n-1} I_k  = \underbrace{\frac{1}{2 \pi i} \sum_{j=1}^{n}\int_{\partial U_1} e^{\lambda_j(z)} \frac{e^{b_n (z-t)}}{z-t} dz}_{\circled{\(\diamond\)}}-\underbrace{\frac{1}{2 \pi i} \sum_{j=1}^{n} \int_{\partial U_1} e^{\lambda_j(z)} \frac{e^{b_j (z-t)}}{z-t} dz}_{\circled{\(\star\)}}.
\end{equation}
Note that the \(n^{th}\) summand of \(\circled{\(\diamond\)}\) and \(\circled{\(\star\)}\) cancel each other.

Since \(f(z) \frac{e^{b_n (z-t)}}{z-t}\) is entire in \((U_1)^c\), we have by Cauchy's Theorem
\begin{equation}
\circled{\(\diamond\)} = \frac{1}{2 \pi i} \int_{\partial U_1} f(z) \frac{e^{b_n (z-t)}}{z-t} dz =0.
\end{equation}

To evaluate \(\circled{\(\star\)}\) we first split the integration path:
\begin{equation}
\circled{\(\star\)}=\sum_{j=1}^n \left[ \underbrace{\frac{1}{2 \pi i} \int_{\partial U_1-C}e^{\lambda_j(z)} \frac{e^{b_j (z-t)}}{z-t} dz}_{\circled{1}} + \underbrace{\frac{1}{2 \pi i} \int_{C}e^{\lambda_j(z)} \frac{e^{b_j (z-t)}}{z-t} dz}_{\circled{2}}\right],
\end{equation}
with \(C\) a positively oriented curve with trace \(\{z : \; |z| = R >t\}\) as depicted in figure \ref{fig-contour}. Since \(z=t\) is the only pole enclosed by \(\partial U_1 - C\), using the residue Theorem, \(\circled{1}= - e^{\lambda_j (t)}\). We then rewrite \(\circled{2}\) using Theorem \ref{thm-lambda} to express \(\lambda_j\) as 
\begin{equation}
\lambda_j(z)= -b_j z + a_{jj}+r_j(z),
\end{equation}
where \(r_j\) is analytic in \( U_1\) and \(r_j(\infty)=0\). So 
\begin{equation}
\circled{2}=e^{a_{jj}-b_j t} \frac{1}{2\pi i}\int_C \frac{e^{r_j(z)}}{z-t} dz.
\end{equation}

\begin{figure}
\centering
\def\svgwidth{ 0.75 \columnwidth}
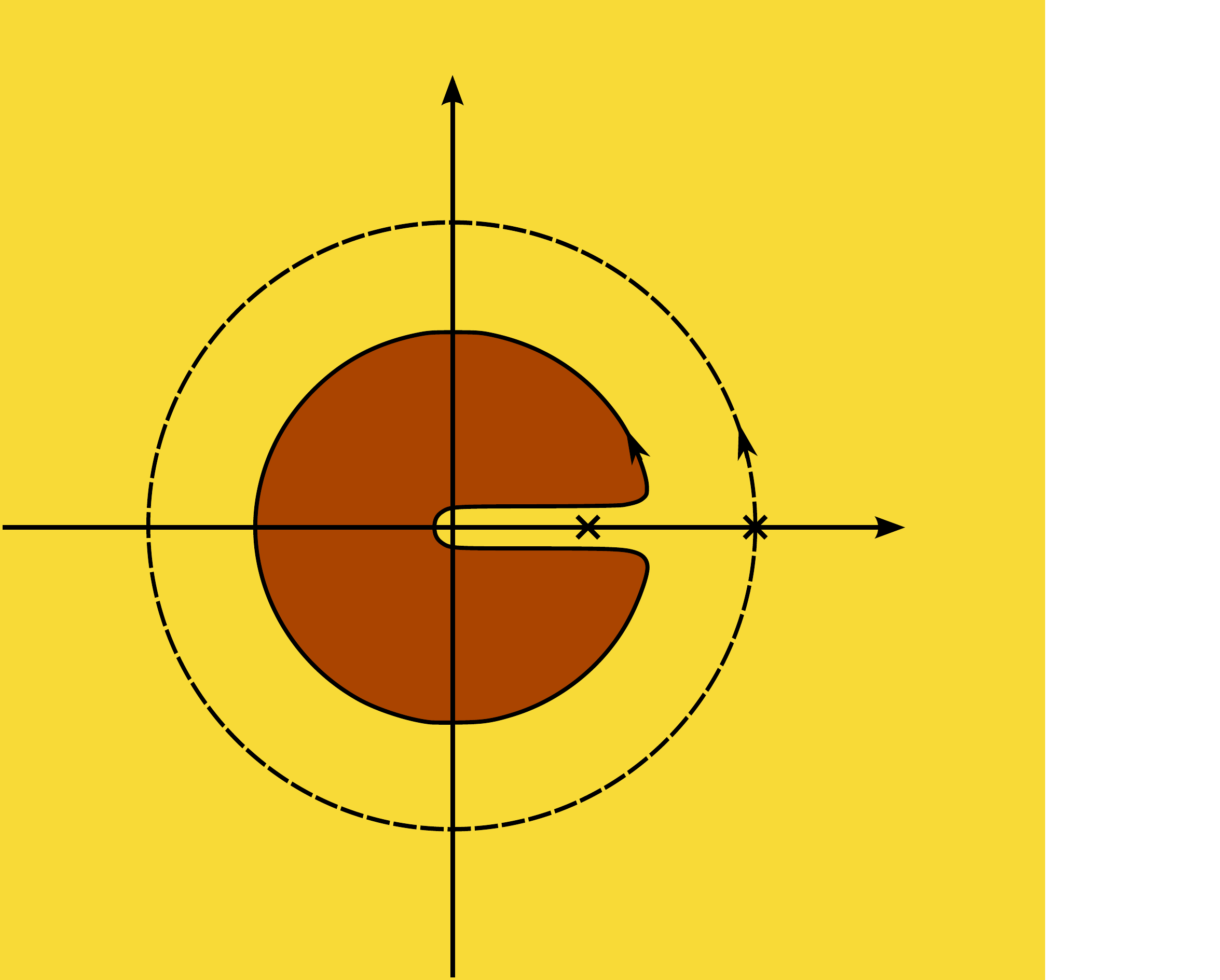
\caption{The choice of \(U_1\), in yellow, is made such that \(t \in U_1\). Such a choice is enabled by Theorem \ref{thm-lambda} i).}
\label{fig-contour}
\end{figure}

Performing the change of variable \(z:= 1/z\), the new variable integrates over
\begin{equation}
\frac{1}{C} : \text{negatively oriented curve with trace } \{z : \; |z|= \frac{1}{R}\},
\end{equation}
and we therefore get
\begin{equation}
\label{equ-circle2}
\circled{2}=- e^{a_{jj}-b_j t}\frac{1}{2 \pi i} \int_{\frac{1}{C}} \frac{e^{r_j(z^{-1})}}{z (1-tz)} dz= e^{a_{jj}-b_j t}e^{r_k(\infty)}=e^{a_{jj}-b_j t};
\end{equation}
since as \(|z|= 1/R <1/t\), the only pole of the integrand is at \(z=0\).

Gathering the results of \( \circled{1}, \; \circled{2}, \; \circled{\(\star\)} \text{ and } \circled{\(\diamond\)}\) we get
\begin{equation}
\mathcal{L} (\omega) (t)= -\circled{\(\star\)}= - \sum_{j=1}^{n} e^{a_{jj}-b_j t} + \Tr{e^{A-tB}},
\end{equation}
which with the result of the intuitive case gives \(\mathcal{L}(\mu)(t)= \Tr{e^{A-tB}}\).
\end{proof}

\section{Domain of Definition of \(\lambda\)}

We would now like to talk about \(\lambda\), the solution of \(det \left( \lambda(t) \, id - (A-tB) \right)=0\), in a global fashion instead of viewing it as \(n\) different functions \(\lambda_1, \dots, \lambda_n\). A fruitful way to do so is to define its domain of definition, \(S\), as a Riemann surface; for further reading see \cite{Farkas-92} or \cite{Teleman-03}.

We choose the \(n\) sheets of \(S\), \(S_j \; (j=1,\dots,n)\), such that in the neighborhood of infinity where we already numbered the \(\lambda_j\)'s (see Theorem \ref{thm-lambda}) we have that
\begin{equation}
\lambda_j=\lambda \circ \pi_j^{-1},
\end{equation}
with \(\pi: S \rightarrow \bar{\mathbb{C}}\) the canonical projection of \(S\) and \(\pi_j\) its restriction to \(S_j\).

We further denote the lifting of the complex conjugate over \(S\) by \(\rho\) and note that since \(\lambda\) is of real type, \(\rho (S_+)= S_-\) and vice versa; where
\begin{equation}
\begin{aligned}
S_+ &:= \{ \xi \in S \mid  \operatorname{Im} \pi(\xi)>0 \},\\
S_- &:=  \{ \xi \in S \mid  \operatorname{Im} \pi(\xi)<0 \}.
\end{aligned}
\end{equation}
That is, \(S\) is anti-conformal.
\section{Non-Negativity of \(\mu\)}

To conclude the proof of Theorem \ref{thm-BMV} we have to prove that \(\mu= \sum_{j=1}^n e^{b_j} \delta_{b_j} + \omega \geq 0\).
The first summand is obviously non-negative. To prove the second one is also non-negative, we need to show that
\begin{equation}
\omega(s)= \frac{1}{2\pi} \sum_{j : b_j <s} \int_{\partial U} e^{\lambda_j(z) + s z} dz \geq 0 ;\quad \forall s \in (b_1, b_n].
\end{equation}

To do so, we will replace the lift of \(\sum_{j: b_j<s} \int_{\partial U}\) on \(S\) by \(\int_{\gamma}\) on which the projection of the integrand is real and positive, for some well chosen contour \(\gamma\) on \(S\).

In the following we fix \(s \in (b_k, b_{k+1})\), with \(k \in \{1, \dots, n-1\}\) also fixed. The case \(s=b_{k+1}\) is achieved by continuity. We also write \(g:= \lambda + s \pi\).

\subsection{Constructing \(\gamma\)}

On \(S\) we define \(D:= \{ \xi \mid \frac{\operatorname{Im}g(\xi)}{\operatorname{Im}\pi(\xi)} >0 \}\). For \(\xi_0 \in \pi^{-1}(\mathbb{R})\) we note that since the \(\lambda_j\)'s have no branch point over \(\mathbb{R}\), we locally stay on the same sheet such that \(\pi\) locally has an inverse \(\pi^{-1}\). Thus, although \(\operatorname{Im}\pi(\xi_0)=0\), we can define the quotient as
\begin{equation}
\frac{\operatorname{Im}g(\xi_0)}{\operatorname{Im}\pi(\xi_0)} := \lim_{y \rightarrow 0} \frac{\operatorname{Im}g \circ \pi^{-1} (x_0,y)}{y},
\end{equation}
with \(\pi(\xi)=x+iy\equiv (x,y)\) and \(\pi(\xi_0)=(x_0,0)\). Furthermore, since \(\lambda \circ \pi^{-1}\) is of real type, \(\operatorname{Re} g\circ \pi^{-1} (x,y) \) is even in \(y\) and hence \( (\partial_2 \operatorname{Re} g \circ \pi^{-1}) (x_0,0)=0\), such that with l'H\^opital's rule we get
\begin{equation}
\label{equ-quotientD}
\frac{\operatorname{Im} g(\xi_0)}{\operatorname{Im} \pi(\xi_0)}= (\partial_2 g \circ \pi^{-1})(x_0,0);
\end{equation}
showing that the quotient is well defined for any \(\xi \in S\).
To help visualize \(D\), we note that \(\rho(D)=D\). A possible realization of \(D\) is depicted in figure \ref{fig-D}. We next look at \(\partial D=\{ \xi \mid \frac{\operatorname{Im}g(\xi)}{\operatorname{Im}\pi(\xi)} =0 \}\) and find using equation \ref{equ-quotientD} that \(\partial D \cap \pi^{-1}(\mathbb{R})\) is made of discrete points being in fact the continuation of the curves of \(\partial D \cap \pi^{-1}(\mathbb{R}^c) = (\operatorname{Im}g)^{-1}(\{0\}) \cap \pi^{-1}(\mathbb{R}^c)\) (see appendix B of \cite{Clivaz-14}). We propose \(\partial D\) to be the trace of \(\gamma\).

That \(\gamma\) is suited to prove the positivity of \(\mu\) is the content of
\begin{proposition}[The Crucial Link]
\label{prop-crucial}
\begin{equation}
 \frac{1}{2\pi i} \sum_{j: b_j < s} \int_{\partial U} e^{\lambda_j(z)+sz} dz = - \frac{1}{2\pi i} \int_{\gamma} e^{g(\xi)} d\xi >0.
\end{equation}
\end{proposition}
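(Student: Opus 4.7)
\emph{Lifting and identifying the target contour.} Since the sheets were numbered near infinity so that $\lambda_j = \lambda \circ \pi_j^{-1}$, the chain $\tilde\gamma_0 := \bigcup_{j=1}^{k} \pi_j^{-1}(\partial U) \subset S$ lifts the left-hand side, giving
\[
\sum_{j\,:\,b_j<s} \int_{\partial U} e^{\lambda_j(z)+sz}\, dz \;=\; \int_{\tilde\gamma_0} e^{g(\xi)}\, d\pi(\xi),
\]
where the $1$-form $e^g\, d\pi$ is holomorphic on $S$ except at the $n$ points at infinity $\infty_1,\dots,\infty_n$. From Theorem \ref{thm-lambda}(ii) we have $g(\xi) = (s-b_j)\pi(\xi) + a_{jj} + \mathcal{O}(1/\pi(\xi))$ near $\infty_j$, so the defining quotient $\operatorname{Im} g/\operatorname{Im}\pi$ tends to $s-b_j$ at $\infty_j$; hence $\infty_j \in D$ precisely when $j\leq k$. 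Thus $\tilde\gamma_0$ and $\gamma = \partial D$ enclose exactly the same set of punctures.

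\emph{Contour deformation and the minus sign.} Applying Stokes' theorem to $D$ with small coordinate disks around each $\infty_j \in D$ removed relates $\int_{\tilde\gamma_0} e^g\, d\pi$ to $\int_\gamma e^g\, d\pi$ up to an orientation sign. In the local coordinate $u = 1/\pi$ near $\infty_j$, the convention ``$\partial U$ counter-clockwise around $0$'' becomes ``clockwise around $\infty_j$'', which is the reverse of the boundary orientation inherited from $D$ as a region containing $\infty_j$. Tracking this gives precisely $\int_{\tilde\gamma_0} e^g\, d\pi = -\int_\gamma e^g\, d\pi$, which is the equality of the proposition.

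\emph{Extracting positivity via the anti-conformal symmetry.} On $\gamma$ the defining quotient vanishes, and formula \eqref{equ-quotientD} at real points forces $\operatorname{Im} g \equiv 0$ on $\gamma$; in particular $e^{g(\xi)} > 0$ there. Because $\rho(D) = D$ and $\rho$ is orientation-reversing, $\gamma$ splits as $\gamma_+ \cup \gamma_-$ with $\gamma_-$ the orientation-reversed $\rho$-image of $\gamma_+$. The real-type property $g\circ\rho = \overline{g}$ yields $\rho^*(e^g\, d\pi) = \overline{e^g\, d\pi}$, hence
\[
\int_\gamma e^g\, d\pi \;=\; \int_{\gamma_+} e^g\, d\pi \;-\; \overline{\int_{\gamma_+} e^g\, d\pi} \;=\; 2i\int_{\gamma_+} e^{g}\, dy,
\]
where $y = \operatorname{Im}\pi$, and consequently
\[
-\frac{1}{2\pi i}\int_\gamma e^g\, d\pi \;=\; -\frac{1}{\pi}\int_{\gamma_+} e^g\, dy,
\]
which is manifestly real. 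It remains to show this quantity is strictly positive, equivalently that $\int_{\gamma_+} e^g\, dy < 0$.

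This last sign is the genuine obstacle. My plan is to exploit the fact that $\gamma_+$ is a compact chain in $S_+$ with endpoints on $\pi^{-1}(\mathbb{R})$, and that, with $D$ on its left, the natural boundary orientation makes $\gamma_+$ descend overall from the upper interior of $D$ back down to the real axis. Making this precise requires a detailed geometric analysis of $D \cap S_+$: one must control the topology of $\partial D$ through the branch points of $S$ and the critical points of $g$, decompose $\gamma_+$ into monotone pieces in $y$, and verify that the $y$-displacements sum (weighted by the positive factor $e^g$) to a negative number. This is where the remaining technical work must go, and I expect it to use, in addition to the anti-conformal involution $\rho$, a Morse-type analysis of the level set $\{\operatorname{Im} g = 0\}$ on $S$ to rule out pathological configurations of $\partial D$.
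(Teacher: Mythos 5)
Your first two steps (the lift of the left-hand side to the components of \(\pi^{-1}(\partial U)\) lying over sheets \(1,\dots,k\), the observation via Theorem \ref{thm-lambda}(ii) that exactly these components lie in \(D\), and the application of Cauchy/Stokes on \(D\) with the neighborhoods of the punctures removed to produce the equality with the minus sign) are essentially the paper's argument: the paper sets \(D_0:=D\setminus\pi^{-1}(\overline{U})\), notes \(\partial D_0=\gamma+C_1+\dots+C_k\), and applies Cauchy's Theorem on the bounded region \(D_0\). Your symmetry reduction using \(\rho\) and \(\operatorname{Im}g\equiv 0\) on \(\gamma\), giving \(-\frac{1}{2\pi i}\int_{\gamma}e^{g}\,d\pi=-\frac{1}{\pi}\int_{\gamma_+}e^{g}\,dy\), also matches the first half of the paper's computation.

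However, the strict positivity — which you explicitly leave open — is the actual content of the statement, and your proposed route (decomposing \(\gamma_+\) into pieces monotone in \(y\) and summing weighted \(y\)-displacements) is not how it is resolved and is unlikely to close without a further idea: \(y\) along \(\gamma_+\) need not behave in any controlled way, and there is no direct handle on the sign of \(\int_{\gamma_+}e^{g}\,dy\). The missing ingredient is the monotonicity of \(\operatorname{Re}g\) \emph{along} \(\gamma\): since \(\gamma=\partial D\) with \(D=\{\operatorname{Im}g/\operatorname{Im}\pi>0\}\), the normal derivative of \(\operatorname{Im}g\) across \(\gamma\) has a definite sign on \(S_+\) (and the opposite sign on \(S_-\)), so by the Cauchy--Riemann equations \(\operatorname{Re}(g\circ\gamma_k)\) is monotonically increasing on \(\gamma_k^{-1}(S_+)\) (Lemma \ref{lemma-gammaprop}(ii)). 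One then decomposes \(\gamma\) into Jordan curves \(\gamma_k\) whose endpoints lie on \(\pi^{-1}(\mathbb{R})\) with \(\rho(\gamma_k)=-\gamma_k\) (Lemma \ref{lemma-gammaprop}(i) and Remark \ref{remark-rho}) and integrates by parts in your integral \(\int_{\gamma_+}e^{g}\,dy\): the boundary terms vanish because the endpoints lie over \(\mathbb{R}\), and the resulting integrand \(\bigl(e^{g\circ\gamma_k}\bigr)'\cdot\operatorname{Im}(\pi\circ\gamma_k)\) is strictly positive on \(S_+\), which yields \(\int_{\gamma_+}e^{g}\,dy<0\) (Lemma \ref{lemma-intgamma}) and hence the claimed positivity. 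Without this transfer of the derivative onto \(e^{g}\) — i.e.\ without exploiting that \(e^{g}\) is monotone along the curve rather than trying to control the geometry of \(y\) along it — your sketch does not complete the proof.
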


Indeed, proving it concludes the proof of Theorem \ref{thm-BMV}. Before doing so, we though look into some properties of \(\gamma\).

\begin{figure}
\centering
\def\svgwidth{ 0.75 \columnwidth}
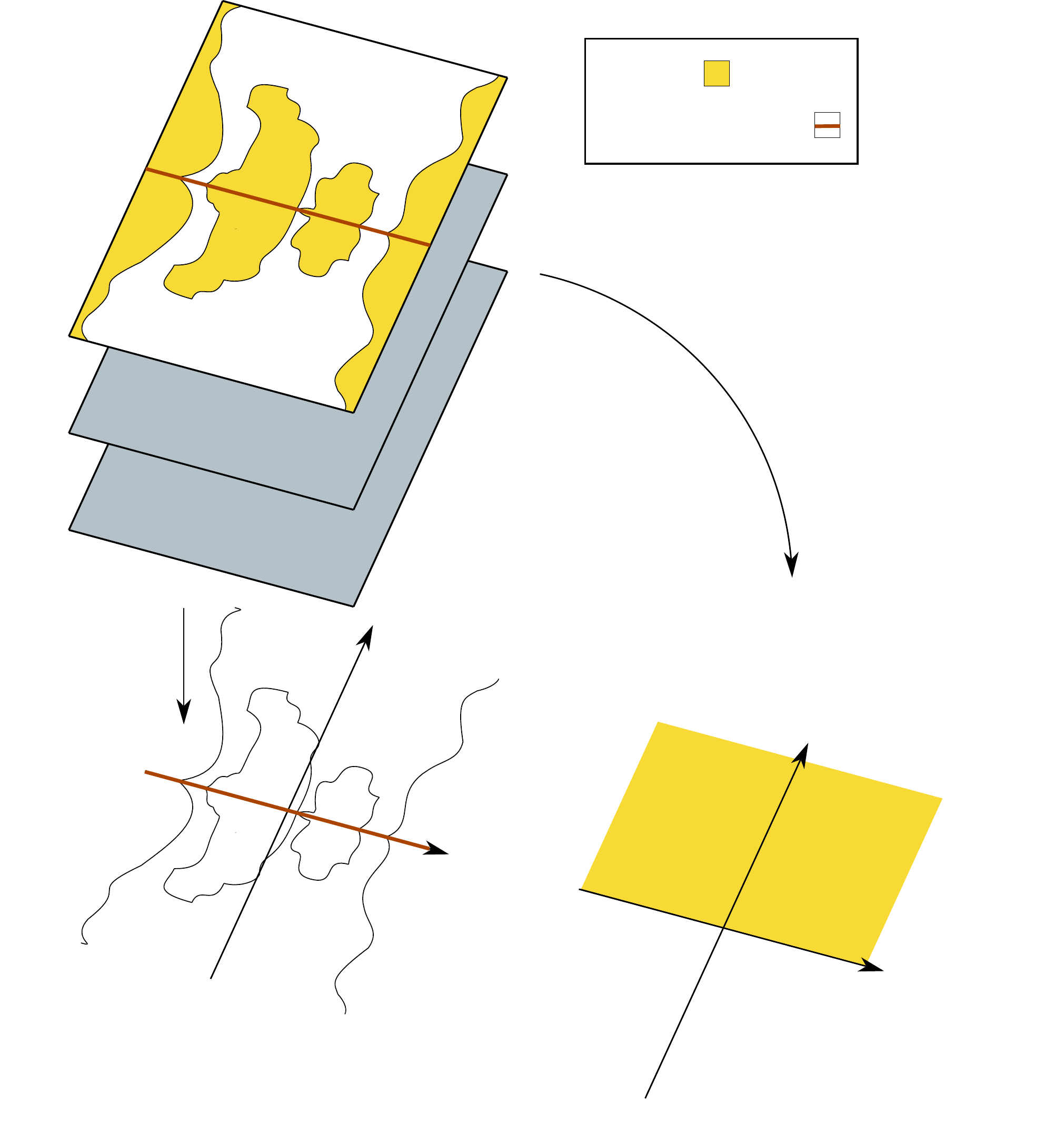
\caption{A possible representation of D on S as well as its image through h are displayed in yellow. Note the symmetry of D with respect to \(\pi^{-1}\left(\mathbb{R}\right)\), depicted in brown.}
\label{fig-D}
\end{figure}

\subsection{Properties of \(\gamma\)\label{sec-propgamma}}

\begin{lemma}
\label{lemma-gammaprop}
{\normalfont \textbf{ i)  }} \(\gamma = \gamma_1 + \dots + \gamma_N, \quad \gamma_i: \text{positively oriented Jordan curve.}\)
\begin{equation}
\begin{aligned}
{\normalfont \textbf{ii) }}& \operatorname{Re}(g\circ \gamma_k)& \text{monotonically increasing on }  \gamma_k^{-1}(S_+),\\
& & \text{monotonically decreasing on } \gamma_k^{-1}(S_-).
\end{aligned}
\end{equation}
\end{lemma}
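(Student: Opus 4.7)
The plan is to exploit that, away from $\pi^{-1}(\mathbb{R})$, the set $\partial D$ is a level set of $\operatorname{Im} g$, and to combine this with the Cauchy--Riemann equations for the holomorphic function $g$.

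For part (i), I first decompose $D$ according to its intersection with $S_+$ and $S_-$. Since $\operatorname{Im}\pi$ has a definite sign on each, one has $D \cap S_+ = \{\xi \in S_+ : \operatorname{Im} g(\xi) > 0\}$ and $D \cap S_- = \{\xi \in S_- : \operatorname{Im} g(\xi) < 0\}$. Hence $\partial D \cap (S_+ \cup S_-) \subset (\operatorname{Im} g)^{-1}(0)$, and at any point where $g' \neq 0$ this is a smooth real 1-submanifold by the implicit function theorem. The crossings with $\pi^{-1}(\mathbb{R})$ are discrete and, as the excerpt already observes via appendix B of \cite{Clivaz-14}, form the continuation of the smooth arcs in $\pi^{-1}(\mathbb{R}^c)$; the finitely many critical points of $g$ are handled by local analysis. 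Since $S$ is a compact Riemann surface and $\partial D$ is closed without boundary, it decomposes into finitely many pairwise disjoint piecewise smooth closed curves $\gamma_1, \ldots, \gamma_N$, oriented positively with $D$ on their left.

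For part (ii), at a regular point of $g$ on $\gamma_k$, write $g = u + iv$ in a local holomorphic coordinate. The tangent to $\gamma_k$ lies in $\ker dv$, and the Cauchy--Riemann equations give $\nabla u = R_{-\pi/2}(\nabla v)$, where $R_{-\pi/2}$ denotes clockwise rotation by $\pi/2$; consequently $\nabla u$ is tangent to $\gamma_k$ and $\operatorname{Re}(g \circ \gamma_k)$ is strictly monotonic. To fix the sign on $\gamma_k^{-1}(S_+)$, observe that $D = \{v > 0\}$ lies on the left, so $\nabla v$ points leftwards across $\gamma_k$; rotating by $-\pi/2$ makes $\nabla u$ point forward along $\gamma_k$, so $\operatorname{Re}(g \circ \gamma_k)$ is increasing. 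On $\gamma_k^{-1}(S_-)$, the inequality defining $D$ flips to $v < 0$, so $-\nabla v$ points leftwards; the same rotation now gives $-\nabla u$ forward, and $\operatorname{Re}(g \circ \gamma_k)$ is decreasing. Extension across the isolated non-regular points of $\gamma_k$ follows by continuity.

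The main obstacle will be the careful local analysis near critical points of $g$ and at the crossings of $\pi^{-1}(\mathbb{R})$: there one must verify that the components of $\partial D$ genuinely close up into Jordan loops rather than forming branching graphs, and that the monotonicity of part (ii) extends through such isolated non-smooth points without reversing its direction.
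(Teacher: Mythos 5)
Your part (ii) is essentially the paper's argument: the paper takes coordinates \(l\) along \(\gamma_k\) and \(n\) normal to it pointing into \(D\), reads off the sign of \(\partial_n \operatorname{Im}g\) from the fact that \(\operatorname{Im}g\) vanishes on the curve and is positive inside \(D\) on \(S_+\) (negative on \(S_-\)), and converts this into the sign of \(\partial_l \operatorname{Re}g\) via the Cauchy--Riemann equations. Your rotation formulation, with the sign fixed by keeping \(D\) on the left, is the same computation and is correct (with the caveat, which you note, that monotonicity is only non-strict through the isolated critical points on \(\gamma_k\)).

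The gap is in part (i), and it is precisely the step you label ``the main obstacle'' and then do not carry out. Away from the critical points of \(g\) your implicit-function-theorem description is fine, but the conclusion ``\(S\) compact and \(\partial D\) closed without boundary, hence finitely many pairwise disjoint piecewise smooth closed curves'' presupposes that \(\partial D\) is a closed \(1\)-manifold, which is exactly what fails when a critical point of \(g\) lies on \(\partial D\): there several arcs of the level set \(\operatorname{Im}g=0\) cross, the components need not be pairwise disjoint, and a priori one could get a branching graph rather than Jordan loops. The paper supplies the missing analysis: since \(\operatorname{Im}g\) is harmonic and non-constant, the critical points are finite in number, at such a point \(g\) (minus its value) vanishes to some finite order \(m\), and by the Auxiliary theorem of section 4.1 of \cite{Korevaar-11} the zero set of \(\operatorname{Im}g\) consists locally of exactly \(m\) analytic arcs through that point; the local uniqueness of the level curve at regular points and the behaviour at the crossings of \(\pi^{-1}(\mathbb{R})\) are handled in appendices B and C of \cite{Clivaz-14}; and the closing up of these arcs into loops is deduced from the anti-conformal symmetry \(\rho(D)=D\) of \(S\) (which is also what later permits the normalization \(\rho(\gamma_k)=-\gamma_k\) of Remark \ref{remark-rho}), not from compactness alone. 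Until you provide this local structure at the critical points and at \(\pi^{-1}(\mathbb{R})\), together with an argument that the arcs concatenate into positively oriented loops with \(D\) on the left, part (i) is asserted rather than proved.
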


\begin{proof}
From the above discussion, up to discrete points, the trace of \(\gamma\) is \((\operatorname{Im}g)^{-1}({0})\backslash \pi^{-1}(\mathbb{R})\). Since \(\operatorname{Im}g\) is a harmonic function, everywhere except at a finite number of critical points denoted by \(C_r\), \(\partial D\) is locally the trace of a unique curve (see appendix C of \cite{Clivaz-14} for a proof) . Furthermore, since \(\operatorname{Im}g\) is non-constant, any point of \(C_r\) is found to be a zero of order \(m<\infty\) of \(g\); and hence by the Auxiliary theorem of section 4.1 in \cite{Korevaar-11}, \(\partial D\) is the trace of exactly \(m\) curves around such points. Because of the anti-conformal structure of \(S\), those traces form closed loops; allowing us to choose \(\gamma\) as in i).

As depicted in figure \ref{fig-gamma}, \(\operatorname{Im}g\) changes sign when one crosses the trace of \(\gamma_k\). Choosing the axis \(l\) along \(\gamma_k\) and \(n\) normal to it pointing in \(D\), this means
\begin{equation}
\begin{aligned}
\frac{\partial}{\partial n} \operatorname{Im}g(\xi) > 0&  \quad \forall \xi \in \gamma \cap S_+,\\
\frac{\partial}{\partial n} \operatorname{Im}g(\xi) < 0& \quad \forall \xi \in \gamma \cap S_-,
\end{aligned}
\end{equation}
\begin{figure}
\centering
\def\svgwidth{ 0.75 \columnwidth}
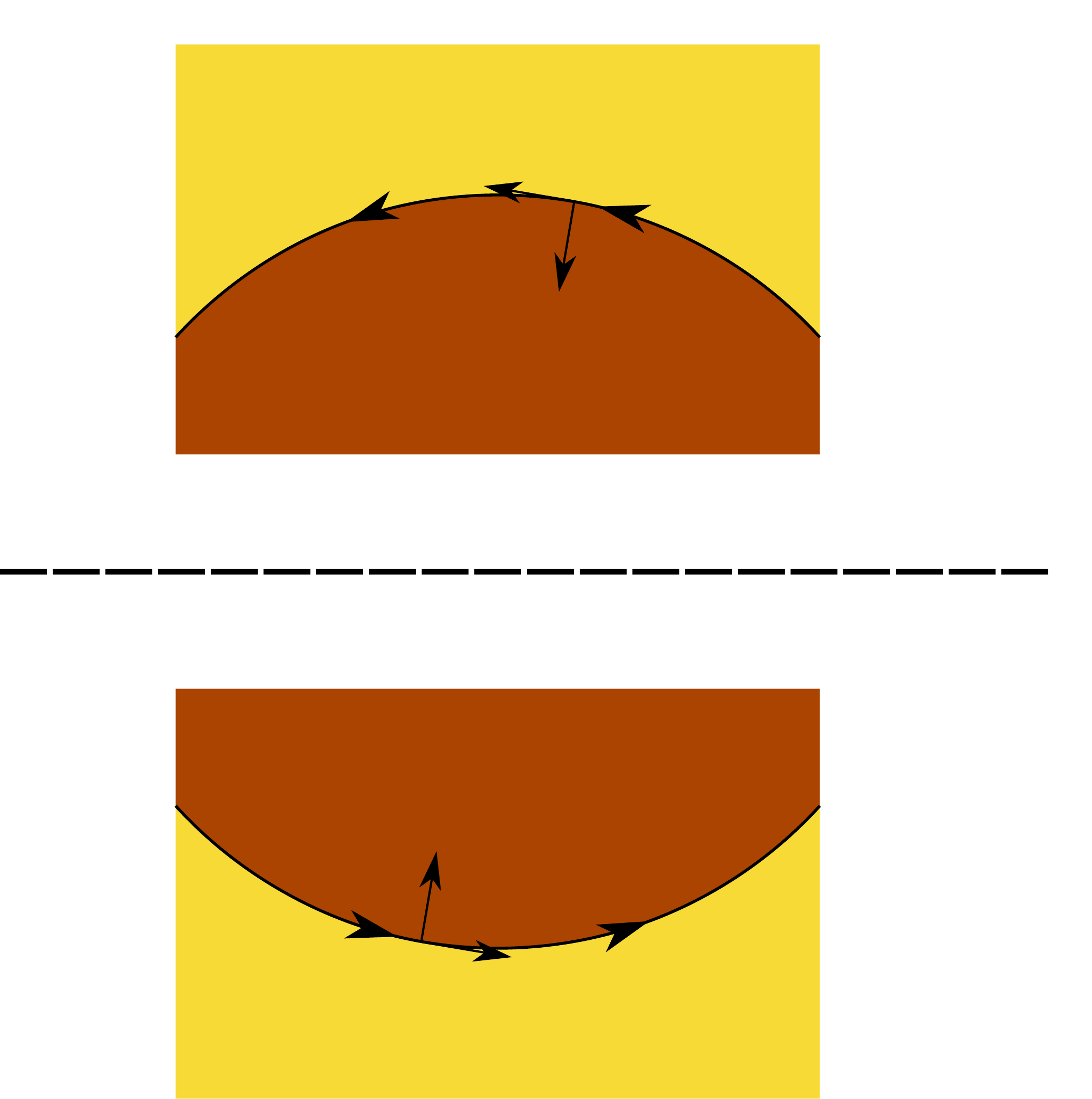
\caption{ The curve \(\gamma_k\) is locally depicted in a region of \(S_+\), top, and \(S_-\), bottom. The change of sign of \(\operatorname{Im}g\) is observed when crossing \(\gamma_k\) along the local coordinate \(\protect \overrightarrow{n}\).}
\label{fig-gamma}
\end{figure}
which using the Cauchy-Riemann equations gives
\begin{equation}
\begin{aligned}
\frac{\partial}{\partial l} \operatorname{Re}g(\xi) > 0 & \quad \forall \xi \in \gamma \cap S_+,\\
\frac{\partial}{\partial l} \operatorname{Re}g(\xi) < 0 & \quad \forall \xi \in \gamma \cap S_-;\\
\end{aligned}
\end{equation}
proving ii).
\end{proof}

\begin{remark}
\label{remark-rho}
Since the \(\gamma_k\)'s are single valued, point ii) of Lemma \ref{lemma-gammaprop} tells us that each \(\gamma_k\) has to be contained in both \(S_+\) and \(S_-\), allowing us to chose \(\gamma\) such that the endpoints of \(\gamma_k\) lie on \(\pi^{-1}(\mathbb{R})\) and \(\rho(\gamma_k)= - \gamma_k\).
\end{remark}

\begin{lemma}
\label{lemma-intgamma}
\(\displaystyle -\frac{1}{2\pi i} \int_{\gamma_k} e^{g(\xi)} d\xi > 0.\)
\end{lemma}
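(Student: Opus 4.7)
The plan is to exploit the symmetry $\rho(\gamma_k)=-\gamma_k$ together with the fact that $g$ takes real values on $\gamma_k\subset\partial D$, in order to reduce the complex integral to a one-dimensional real integral whose sign can be read off directly. Split the Jordan curve as $\gamma_k=\gamma_k^+\cdot\gamma_k^-$, where $\gamma_k^{\pm}:=\gamma_k\cap S_{\pm}$; by Remark \ref{remark-rho} the two pieces are interchanged by $\rho$ with reversal of orientation, and their common endpoints lie on $\pi^{-1}(\mathbb{R})$.

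First, reduce the full integral to one over $\gamma_k^+$. Parametrising $\gamma_k^+$ by $t\in[0,\tfrac12]$ and using $\pi\circ\rho=\overline{\pi}$ together with $g\circ\rho=\overline{g}$ (both hold since $\pi$ and $\lambda$, hence $g=\lambda+s\pi$, are of real type), the substitution $u=1-t$ gives $\int_{\gamma_k^-}e^{g}\,d\xi=-\overline{\int_{\gamma_k^+}e^{g}\,d\xi}$. Consequently $\int_{\gamma_k}e^{g}\,d\xi=2i\,\operatorname{Im}\!\int_{\gamma_k^+}e^{g}\,d\xi$, so it suffices to prove that this imaginary part is strictly negative.

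Next, use the reality of $g$ on $\gamma_k$. Setting $G:=g|_{\gamma_k^+}\in\mathbb{R}$ and separating real and imaginary parts of $d\xi=d(\operatorname{Re}\pi)+i\,d(\operatorname{Im}\pi)$, one obtains $\operatorname{Im}\!\int_{\gamma_k^+}e^{g}\,d\xi=\int_{\gamma_k^+}e^{G}\,d(\operatorname{Im}\pi)$. Integration by parts kills the boundary contribution, since $\operatorname{Im}\pi$ vanishes at both endpoints of $\gamma_k^+$, and leaves $-\int_{\gamma_k^+}(\operatorname{Im}\pi)\,e^{G}\,dG$. On $\gamma_k^+$ we have $\operatorname{Im}\pi>0$ because $\gamma_k^+\subset S_+$, and by Lemma \ref{lemma-gammaprop}~ii) $G=\operatorname{Re}g$ is strictly increasing, so $dG>0$. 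The integrand is therefore strictly positive, which gives $\operatorname{Im}\!\int_{\gamma_k^+}e^{g}\,d\xi<0$, and after multiplication by $-\tfrac{1}{2\pi i}\cdot 2i=-\tfrac{1}{\pi}$ the claimed strict positivity follows.

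The genuinely delicate step is the first one: properly keeping track of orientations when cutting $\gamma_k$ into $\gamma_k^{\pm}$, and correctly extracting the sign picked up under the $\rho$-induced change of variables. Once that reduction is in place, all the required sign data has already been collected in Remark \ref{remark-rho} (endpoints on $\pi^{-1}(\mathbb{R})$, so the boundary term vanishes) and in Lemma \ref{lemma-gammaprop}~ii) (monotonicity of $\operatorname{Re}g$), and the remainder of the argument is essentially a one-line positivity calculation.
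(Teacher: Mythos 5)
Your proposal is correct and follows essentially the same route as the paper's own proof: the $\rho$-symmetry of Remark \ref{remark-rho} reduces the integral to (the imaginary part of) the integral over $\gamma_k\cap S_+$, reality of $g$ on $\gamma_k$ isolates the $d(\operatorname{Im}\pi)$ term, and integration by parts (boundary terms vanishing since the endpoints lie on $\pi^{-1}(\mathbb{R})$) combined with the monotonicity from Lemma \ref{lemma-gammaprop}~ii) and $\operatorname{Im}\pi>0$ on $S_+$ gives the sign. The only difference is cosmetic: you take the imaginary part first and the paper combines the two half-integrals before integrating by parts.
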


\begin{proof}
As \(\gamma_k \subset(\operatorname{Im}g)^{-1}(\{0\})\), we have that 
\begin{equation}
\label{equation-img}
\operatorname{Im}g(\xi)=0 \quad \forall \xi \in \gamma_k;
\end{equation}
and hence together with Lemma \ref{lemma-gammaprop} ii)
\begin{equation}
\label{equation-expg}
e^{g \circ \gamma_k}= e^{\operatorname{Re} g \circ \gamma_k} \text{ is monotonically increasing on } \gamma_k^{-1} (S_+). 
\end{equation}

Writing \(z=x+iy\) and \(\pi^{-1}(z)= \xi= \nu + i \eta\) , we therefore get:
\begin{equation}
\begin{aligned}
\frac{1}{2\pi i} \int_{\gamma_k} e^{g(\xi)} d\xi &= \frac{1}{2\pi i} \left( \int_{\gamma_k \cap S_+} e^{g(\xi)} d\xi +\int_{\gamma_k \cap S_-} e^{g(\xi)} d\xi \right)\\
&\myequ{Rem.\ref{remark-rho}}  \frac{1}{2\pi i} \left( \int_{\gamma_k \cap S_+} e^{g(\xi)} (d\nu+i d\eta) +\int_{-\gamma_k \cap S_+} e^{\overline{g(\xi)}} (d\nu-i d\eta) \right)\\
&\myequ{equ.\ref{equation-img}}  \frac{1}{\pi}  \int_{\gamma_k \cap S_+} e^{g(\xi)} d\eta \\
&=  \frac{1}{\pi}  \int_{\gamma_k^{-1} (S_+)} e^{g\circ \gamma_k (s)} \operatorname{Im} \left( \pi_{\lambda}\circ \gamma_k \right)' ds \\
&\myequ{IBP}  -\frac{1}{\pi}  \int_{\gamma_k^{-1} (S_+)} \underbrace{\left(e^{g\circ \gamma_k (s)}\right)'}_{>0} \underbrace{\operatorname{Im} \left( \pi_{\lambda}\circ \gamma_k \right)}_{>0} ds <0,\\
\end{aligned}
\end{equation}
where the boundary terms when performing the integration by parts (IBP) vanish because of Remark \ref{remark-rho}.
\end{proof}

\subsection{Proof of The Crucial Link}
We finally prove Proposition \ref{prop-crucial}, which finishes the proof of Theorem \ref{thm-BMV}.

\begin{proof}[of Proposition \ref{prop-crucial}]
From Theorem \ref{thm-lambda}, we can choose a neighborhood of infinity, \(U\), such that \(\lambda\) has no branch point over \(\pi^{-1}\left(\overline{U}\right)\). That means that \(\pi^{-1} \left(\overline{U}\right)\) is made of \(n\) disjoint components, each fully in one sheet. We can hence write:
\begin{equation}
\pi^{-1} \left( \partial U\right)= C_1 \cup \dots \cup C_n\; ; \quad C_j \subset S_j, \quad \forall j =1,\dots, n.
\end{equation}

In \(\overline{U}\) we furthermore have by Theorem \ref{thm-lambda} ii) that
\begin{equation}
\operatorname{Im} \left(\lambda_j(z)+sz\right) = (s-b_j) \operatorname{Im}(z) + \operatorname{Im} \left( \mathcal{O} \left(\frac{1}{z}\right)\right).
\end{equation}

So since \(s \in (b_k, b_{k+1})\), for \(\lvert z \rvert > R\), \(R>0\) big enough we achieve:
\begin{equation}
\begin{aligned}
j \leq k: \quad& \operatorname{Im} \left( \lambda_j(z) + sz \right) \text{ has same sign as } \operatorname{Im} (z),\\
j > k: \quad& \operatorname{Im} \left( \lambda_j(z) + sz \right) \text{ has opposite sign as } \operatorname{Im} (z).\\
\end{aligned}
\end{equation}

Choosing \(\overline{U} \subset \{z \mid \lvert z \rvert > R\}\) we have :
\begin{equation}
\begin{alignedat}{2}
&C_1, \dots, C_k && \in D, \\
&C_{k+1}, \dots, C_n \quad &&\notin D.
\end{alignedat}
\end{equation}

Defining \(D_0:= D \setminus \pi^{-1}(\overline{U})\), we find with the above \(\partial D_0 = \gamma + C_1 + \dots + C_k\) and, since \(D_0\) is bounded, by Cauchy's Theorem 
\begin{equation}
\frac{1}{2 \pi i} \int_{\partial D_0} e^{g(\xi)} d\xi = 0;
\end{equation}
that is
\begin{equation}
\begin{aligned}
- \frac{1}{2 \pi i} \int_{\gamma} e^{g(\xi)} d\xi &= \frac{1}{2 \pi i} \sum_{j=1}^k \int_{C_j} e^{\lambda (\xi) + s \pi(\xi)} d\xi\\
&= \frac{1}{2 \pi i} \sum_{j=1}^k \int_{\partial U} e^{\lambda_j (z) + s z} dz\\
&= \frac{1}{2 \pi i} \sum_{j: b_j < s} \int_{\partial U} e^{\lambda_j (z) + s z} dz,\\
\end{aligned}
\end{equation}
which together with Lemma \ref{lemma-intgamma} proves the assertion.
\end{proof}

\bibliographystyle{unsrt}

\bibliography{refs_prime}

\end{document}